\documentclass[12pt,reqno]{amsart}
\pdfoutput=1 
\usepackage{amsmath,bbm}
\usepackage{latexsym}
\usepackage{amsfonts}
\usepackage{amssymb}
\usepackage{color}
\usepackage[table]{xcolor}

\usepackage{graphicx}
\usepackage{url}
\usepackage{enumerate}
\usepackage{tikz}
\usetikzlibrary{shadings, intersections, calc, plotmarks}
\usepackage{marginnote}
\usepackage{stackrel}
\usepackage{empheq}
\usepackage{tcolorbox}
\usetikzlibrary{arrows.meta,positioning}

\usepackage{geometry}
\usepackage{hyperref}
\hypersetup{
  colorlinks=true,
  linkcolor=blue!90!black,    
  citecolor=green!50!black,   
  urlcolor=blue!70!black,     
  linktoc=all,                
}
\usepackage{cleveref}

\xdefinecolor{tumblue}     {RGB}{0,101,189}
\xdefinecolor{tumgreen}    {RGB}{162,173,  0}
\xdefinecolor{tumred}      {RGB}{229, 52, 24}
\xdefinecolor{tumivory}    {RGB}{218,215,203}
\xdefinecolor{tumorange}   {RGB}{227,114, 34}
\xdefinecolor{tumlightblue}{RGB}{152,198,234}

\newtheorem*{proposition*}{Proposition}

\newtheorem*{theorem*}{Theorem}

\newtheorem*{lemma*}{Lemma}

\newtheorem*{corollary*}{Corollary}

\newcommand{\R}{\mathbbm{R}}

\newcommand{\C}{\mathbbm{C}}

\newcommand{\N}{\mathbbm{N}}

\newcommand{\1}{\mathbbm{1}}

\def\>{{\rangle}}
\def\<{{\langle}}

\newcommand{\be}{\begin{equation}}
\newcommand{\ee}{\end{equation}}
\newcommand{\bea}{\begin{eqnarray}}
\newcommand{\eea}{\end{eqnarray}}

\newcommand{\tr}[1]{\operatorname{tr}\mathopen{}\left[#1\right]\mathclose{}} 
\newcommand{\Tr}{\mathrm{tr}}

\newcommand{\overbar}[1]{\mkern3mu\overline{\mkern-3mu#1\mkern-1.5mu}\mkern1.5mu} 

\begin{document}

\title{Ky Fan majorization for binary tensor products}

\author[Wolf]{Michael M. Wolf$^{1,2}$}
\email{m.wolf@tum.de}
\address{$^1$ Department of Mathematics, Technical University of Munich}
\address{$^2$ Munich Center for Quantum
Science and Technology (MCQST),  Munich, Germany}
\author[Zhou]{Yuanheng Zhou$^{1}$}


\begin{abstract} We provide a short proof of a Ky Fan-type majorization relation for the singular values of a sum of binary tensor products of matrices. This generalizes Alhejji's result \cite{Alhejji2025} from two summands to arbitrary sums and from positive matrices to arbitrary matrices. As an application, we show a majorization relation between the singular values of a completely positive map and those of its Kraus operators.
\end{abstract}
\maketitle

 Ky Fan's majorization relation for Hermitian matrices \cite[Ch.\,9]{MOmajo} states that the eigenvalues of a sum of matrices are majorized by the sum of the ordered eigenvalues. Recently, Alhejji  \cite{Alhejji2025} extended this, for positive semidefinite matrices, to a sum of two binary tensor products:
\begin{equation}\label{eq:intro}
    \begin{aligned}
    \lambda\Big(\sum_{l=1}^m A_l\otimes B_l\Big)&\prec\sum_{l=1}^m \lambda(A_l)\otimes\lambda(B_l),\quad \text{for }m=2.\\
\end{aligned}
\end{equation}
In this note, we show that Alhejji's result for binary tensor products holds more generally: for arbitrary sums ($m\in\N$), and arbitrary  matrices. \vspace*{4pt}

We begin by fixing some notation.
For a matrix $A\in\C^{d\times d}$, we will write $\sigma_1(A)\geq\cdots\geq\sigma_d(A)$ for the nonincreasingly ordered list of singular values, and if $A$ is Hermitian, $\lambda_1(A)\geq\cdots\geq\lambda_d(A)$ for the eigenvalues.  We will write $\1$ for the identity matrix, whose dimension will depend on the context.

Two vectors $x,y\in\R^d$ satisfy the \emph{weak majorization} relation $x\prec_w y$ if \smash{$\sum_{i=1}^k x_i^\downarrow\leq \sum_{i=1}^k y_i^\downarrow$} holds for all $k\in\{1,\ldots,d\}$. Here, the down arrows denote a nonincreasingly ordered arrangement of the entries. \emph{Majorization} $x\prec y$ requires that, in addition, equality holds for $k=d$.\vspace*{3pt}

Our proof consists of three steps: (i) we employ the partial trace $\Tr_1$ to eliminate one tensor factor, (ii) on the remaining tensor factor, we follow the classical proof strategy \cite{KyFan} to show the result for positive semidefinite matrices, (iii) we extend it to singular values of arbitrary matrices by a Cauchy-Schwarz argument already used in \cite{RicoWolf}. 
\begin{lemma*}\label{lemma} Let $P\in\C^{d_1\times d_1}$ be a Hermitian rank-$r$ projector, $B\in\C^{d_2\times d_2}$ positive semidefinite, and $E\in\C^{d_1\times d_1}\otimes \C^{d_2\times d_2}$ satisfying $0\leq E\leq \1$. Then
    \begin{equation}
        \tr{E(P\otimes B)}\leq\sum_{j=1}^{d_2}\min\{r,\lambda_j(R)\}\lambda_j(B),\quad\text{where } R\coloneq\Tr_1[E].
    \end{equation}
\end{lemma*}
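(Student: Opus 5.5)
Write $C\coloneq\Tr_1[E(P\otimes\1)]$, so that $\tr{E(P\otimes B)}=\tr{CB}$. The plan has three steps. First, establish spectral properties of $C$. Second, compare the eigenvalues of $C$ with $\min\{r,\lambda_j(R)\}$ through majorization. Third, conclude with a Ky Fan/Abel summation argument against the ordered eigenvalues of $B$.

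\emph{Step 1: properties of $C$.} Since $P\otimes\1\geq0$, we have $C=\Tr_1[(P\otimes\1)E(P\otimes\1)]\geq0$. Two operator upper bounds follow.
\begin{itemize}
\item Because $E\leq\1$, we get $(P\otimes\1)E(P\otimes\1)\leq P\otimes\1$. Taking the partial trace gives $C\leq\tr{P}\1=r\1$.
\item Because $P\leq\1$, we get $E^{1/2}(P\otimes\1)E^{1/2}\leq E$. Taking the partial trace (using cyclicity within the first factor) gives $C\leq R$.
\end{itemize}
So $0\leq C\leq R$ and $C\leq r\1$, and in particular $\lambda_j(C)\leq\min\{r,\lambda_j(R)\}$ for every $j$. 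The second inequality is Weyl monotonicity applied to $C\leq R$.

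\emph{Step 2: the majorization bound.} I aim to show
\[
\sum_{j=1}^k\lambda_j(C)\leq\sum_{j=1}^k\min\{r,\lambda_j(R)\}\qquad\text{for all }k.
\]
This follows directly from the termwise bound $\lambda_j(C)\leq\min\{r,\lambda_j(R)\}$ just obtained. Both operator inequalities hold simultaneously, and each eigenvalue is monotone, so no joint argument is needed.

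\emph{Step 3: conclusion.} By von Neumann's trace inequality (the classical Ky Fan step) and $C,B\geq0$,
\[
\tr{CB}\leq\sum_j\lambda_j(C)\lambda_j(B).
\]
Since $\lambda_j(B)\geq0$ and $\lambda_j(C)\leq\min\{r,\lambda_j(R)\}$ termwise, the right-hand side is at most $\sum_j\min\{r,\lambda_j(R)\}\lambda_j(B)$.

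If only the partial-sum majorization of Step 2 were available, I would instead use Abel summation: $\lambda_j(B)$ is nonincreasing and nonnegative, and $j\mapsto\min\{r,\lambda_j(R)\}$ is nonincreasing in $j$.

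\emph{Main obstacle.} The step requiring the most care is the operator inequality $C\leq R$. It uses that $\Tr_1[XY]=\Tr_1[YX]$ whenever $Y=Y_1\otimes\1$, so that
\[
\Tr_1[E(P\otimes\1)]=\Tr_1[E^{1/2}(P\otimes\1)E^{1/2}].
\]
This holds because the partial trace over the first factor is cyclic with respect to operators acting only on that factor. I would verify this identity explicitly in a product basis before relying on it.
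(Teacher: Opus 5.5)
Your proof has the same architecture as the paper's: your $C$ is the paper's $R_P$, and positivity, $C\le r\1$, Weyl monotonicity and von Neumann's trace inequality are used identically. However, your justification of $C\le R$, the step you yourself flagged, rests on a false identity. Partial-trace cyclicity $\Tr_1[XY]=\Tr_1[YX]$ holds for $Y=Y_1\otimes\1$, which is why your positivity argument (moving $P\otimes\1$) is fine. It does not hold for $E^{1/2}$, which acts on both factors. Writing $G=E^{1/2}(P\otimes\1)$, you are claiming $\Tr_1[G^*G]=\Tr_1[GG^*]$, and this fails in general. Concretely, take $d_1=d_2=2$, $P=|0\rangle\langle0|$, and $E=|\phi\rangle\langle\phi|$ with $\phi=(|00\rangle+|11\rangle)/\sqrt2$, so that $E^{1/2}=E$. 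Then
\[
\Tr_1[E(P\otimes\1)]=\tfrac12|0\rangle\langle0|,\qquad \Tr_1[E^{1/2}(P\otimes\1)E^{1/2}]=\langle\phi|P\otimes\1|\phi\rangle\,\Tr_1[E]=\tfrac14\1 .
\]
So the product-basis check you planned would fail. The inequality $E^{1/2}(P\otimes\1)E^{1/2}\le E$ is true, but after applying $\Tr_1$ it bounds a different operator than $C$.

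The conclusion $C\le R$ is nevertheless true, and the repair is exactly the paper's argument. Split $\1=P+(\1-P)$ on the first factor:
\[
R=\Tr_1[E]=\Tr_1[E(P\otimes\1)]+\Tr_1[E((\1-P)\otimes\1)]=C+\Tr_1\big[((\1-P)\otimes\1)E((\1-P)\otimes\1)\big].
\]
The last term is positive semidefinite by precisely your argument for $C\ge0$, with the projector $\1-P$ in place of $P$. With this replacement the rest of your proof is correct and coincides with the paper's. Your Step 2 and the Abel-summation remark are superfluous, because the termwise bound $\lambda_j(C)\le\min\{r,\lambda_j(R)\}$ feeds directly into von Neumann's inequality.
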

\begin{proof}
      $R_P:=\Tr_1[E(P\otimes\1)]=\Tr_1[(\sqrt{P}\otimes\1)E(\sqrt{P}\otimes\1)]$ is a partial trace of a positive semidefinite operator and thus $R_P\geq 0$. Moreover, it is bounded by $R_P\leq r\1$ since $E\leq \1$ and $\tr{P}=r$, but also by $R_P\leq R$ since $R=R_P+R_{\1-P}$. So Weyl's monotonicity theorem \cite[Cor.\,III.2.3]{Bhatia} gives $\lambda_j(R_P)\leq\min\{r,\lambda_j(R)\}$. Finally, we can use the definition of $R_P$ together with von Neumann's trace inequality \cite[7.4.1]{horn13} to obtain
      $$\tr{E(P\otimes B)}=\tr{R_P B}\leq\sum_{j=1}^{d_2}\lambda_j(R_P)\lambda_j(B)\leq\sum_{j=1}^{d_2}\min\{r,\lambda_j(R)\}\lambda_j(B).\vspace*{-7pt}$$
\end{proof}
Now we can state and prove our main result:
\begin{theorem*}\label{theorem}
    Let $m\in\mathbbm{N}$, $A_l\in\C^{d_1\times d_1}$,  $B_l\in\C^{d_2\times d_2}$ for $l\in\{1,\ldots,m\}$. Then
    \begin{equation}\label{eq:main}
        \sigma\Big(\sum_{l=1}^m A_l\otimes B_l\Big)\ \prec_w\ \sum_{l=1}^m\sigma(A_l)\otimes\sigma(B_l).
    \end{equation}
    If all matrices are positive semidefinite, then singular values become eigenvalues and weak majorization becomes majorization.
\end{theorem*}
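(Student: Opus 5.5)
We first treat the positive semidefinite case and then pass to arbitrary matrices. Let $A_l,B_l\ge 0$ and $X:=\sum_l A_l\otimes B_l\ge0$. By Ky Fan's maximum principle, the sum of the $k$ largest eigenvalues of $X$ equals $\max\tr{EX}$, where the maximum runs over rank-$k$ projectors $E$. We relax this to all $E$ with $0\le E\le\1$ and $\tr{E}=k$; the maximum is unchanged. Write each $A_l$ spectrally in layer-cake form,
$$A_l=\sum_{i=1}^{d_1}\big(\lambda_i(A_l)-\lambda_{i+1}(A_l)\big)P^{(l)}_i,\qquad \lambda_{d_1+1}:=0,$$
where $P^{(l)}_i$ is the rank-$i$ spectral projector onto the top $i$ eigenvectors of $A_l$. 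The coefficients are nonnegative. Applying the Lemma to each term gives
$$\tr{E(A_l\otimes B_l)}\le\sum_i(\lambda_i(A_l)-\lambda_{i+1}(A_l))\sum_j\min\{i,\lambda_j(R)\}\lambda_j(B_l),\qquad R=\Tr_1[E].$$

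The key observation is that $R$ does not depend on $l$. Moreover $R$ satisfies $0\le R\le d_1\1$ and $\tr{R}=k$. Set $\mu_j:=\lambda_j(R)$. Resumming over $i$ gives $\sum_i(\lambda_i-\lambda_{i+1})\min\{i,\mu_j\}=\sum_{i}\lambda_i(A_l)\,c_{ij}$, where $c_{ij}:=\min\{i,\mu_j\}-\min\{i-1,\mu_j\}\in[0,1]$. For each $j$ we have $\sum_i c_{ij}=\mu_j$, and $c_{ij}$ is nonincreasing in $i$. Hence
$$\tr{EX}\le\sum_{i,j}c_{ij}\sum_l\lambda_i(A_l)\lambda_j(B_l)=\sum_{i,j}c_{ij}\,v_{ij},\qquad v:=\sum_l\lambda(A_l)\otimes\lambda(B_l).$$
The weights satisfy $0\le c_{ij}\le1$ and $\sum_{ij}c_{ij}=\tr{R}=k$. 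For such weights, $\sum c_{ij}v_{ij}$ is at most the sum of the $k$ largest entries of $v$, since the maximum of a linear functional over this polytope is attained at $0/1$ vertices. This proves the inequalities for every $k$. For $k=d_1d_2$, both sides equal $\tr{X}=\sum_l\tr{A_l}\tr{B_l}$, which gives majorization.

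For arbitrary matrices, take polar/singular value decompositions $A_l=U_l|A_l|$ and $B_l=V_l|B_l|$. Write $X=\sum_l(U_l\otimes V_l)(|A_l|^{1/2}\otimes|B_l|^{1/2})(|A_l|^{1/2}\otimes|B_l|^{1/2})$; more symmetrically, use $A_l=U_l\Sigma_lW_l^*$ and split $\Sigma_l^{1/2}\Sigma_l^{1/2}$. The sum of the top $k$ singular values is $\max|\tr{Q^*XP}|$ over partial isometries, or equivalently $\max\operatorname{Re}\tr{YX}$ with $\|Y\|_\infty\le1$ of rank $k$. Write $Y=Q P^*$ with $P,Q$ isometries of $k$ columns. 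Applying Cauchy--Schwarz to each summand, and then to the sum over $l$, gives
$$|\tr{P^*(A_l\otimes B_l)Q}|\le \tr{P^*(\tilde A_l\otimes\tilde B_l)P}^{1/2}\tr{Q^*(\hat A_l\otimes\hat B_l)Q}^{1/2},$$
where $\tilde A_l=U_l\Sigma_lU_l^*$ and $\hat A_l=W_l\Sigma_lW_l^*$ are positive semidefinite with the same spectra as $\sigma(A_l)$. Combining the $l$-sums yields $\sqrt{\tr{PP^*\tilde X}\,\tr{QQ^*\hat X}}$, where $\tilde X=\sum_l\tilde A_l\otimes\tilde B_l$ and $\hat X$ is defined analogously. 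Each factor is bounded via the PSD case by the top-$k$ sum of $\sum_l\sigma(A_l)\otimes\sigma(B_l)$, so the geometric mean is bounded by the same quantity. This gives the weak majorization.

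The main obstacle is the combinatorial step in the PSD case. We must check that resumming the Lemma's bound produces weights $c_{ij}\in[0,1]$ of total mass exactly $k$, and that these weights are shared by all $l$ simultaneously. The shared $R$ is exactly what makes the arbitrary-$m$ case work, and this part needs care.
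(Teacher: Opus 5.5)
Your proof is correct. The positive semidefinite part follows the paper's proof step for step: the layer-cake decomposition of each $A_l$, the Lemma, and Abel resummation into weights $c_{ij}\in[0,1]$. These weights depend only on $\Tr_1[E]$, so they are shared across $l$, and they have total mass $\tr{E}=k$, which gives the polytope bound. For arbitrary matrices you use the same Cauchy--Schwarz reduction, but you carry it out differently. The paper stacks the summands into block operators $L=\sum_l|M_l|^{1/2}U_l^*\otimes|l\rangle$ and $R=\sum_l|M_l|^{1/2}\otimes|l\rangle$, so that $L^*R=\sum_l M_l$. It then cites the Cauchy--Schwarz inequality for unitarily invariant norms, $\|L^*R\|_{(k)}\le\|L^*L\|_{(k)}^{1/2}\|R^*R\|_{(k)}^{1/2}$. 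You instead fix isometries $P,Q$ attaining the Ky Fan $k$-norm and split each summand's singular values as $\Sigma^{1/2}\Sigma^{1/2}$. You then apply the Hilbert--Schmidt Cauchy--Schwarz inequality per summand and a scalar Cauchy--Schwarz in $\R^m$ over $l$. In effect you prove inline the Ky Fan-norm case of the cited inequality, which is all that is needed, and your scalar step over $l$ replaces the block construction. Your route is elementary and self-contained. The paper's is shorter and reduces the step to one line via a standard theorem. Both end at the same bound $\|\sum_l|M_l^*|\|_{(k)}^{1/2}\|\sum_l|M_l|\|_{(k)}^{1/2}$, since your $\tilde A_l\otimes\tilde B_l=|A_l^*|\otimes|B_l^*|$ and $\hat A_l\otimes\hat B_l=|A_l|\otimes|B_l|$. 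One minor point: you use $U_l$ for both the polar and the SVD unitary. If you use the SVD throughout, and likewise for $B_l$, every step checks.
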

\begin{proof}
    We will first prove Eq.\,\eqref{eq:main} for positive semidefinite matrices. For an arbitrary $A\geq 0$, let \smash{$A=\sum_{i=1}^{d_1} \lambda_i(A)|\psi_i\rangle\langle\psi_i|$} be a spectral decomposition with orthonormal $\psi_i$. With $P_r:=\sum_{i=1}^r |\psi_i\rangle\langle\psi_i|$, we can rewrite this as
 $$A=\sum_{r=1}^{d_1}\underbrace{\big(\lambda_r(A)-\lambda_{r+1}(A)\big)}_{\geq 0} P_r,\qquad \text{setting }\lambda_{d_1+1}(A)\coloneqq 0.$$
    For an arbitrary $B\geq 0$ and $0\leq E\leq\1$ we can now use the Lemma and obtain\vspace*{-5pt}
    \begin{equation*}
    \begin{aligned}
        \tr{E(A\otimes B)}&=\sum_{r=1}^{d_1} \big(\lambda_r(A)-\lambda_{r+1}(A)\big) \tr{E(P_r\otimes B)}\\
        &\leq\sum_{r=1}^{d_1}\big(\lambda_r(A)-\lambda_{r+1}(A)\big)\sum_{j=1}^{d_2}\min\{r,\lambda_j(R)\}\lambda_j(B)\\
        &=\sum_{r=1}^{d_1}\sum_{j=1}^{d_2}\lambda_r(A)\lambda_j(B)\underbrace{\big(\min\{r,\lambda_j(R)\}-\min\{r-1,\lambda_j(R)\}\big)}_{\eqqcolon C_{r,j}}
    \end{aligned}
    \end{equation*}
    The matrix $C$ only depends on $E$,   satisfies $C_{r,j}\in[0,1]$, and $\sum_{r,j}C_{r,j}=\tr{E}$. The latter follows from \smash{$\sum_{r=1}^{d_1} C_{r,j}=\min\{d_1,\lambda_j(R)\}$} and $\lambda_j(R)\leq d_1$. 

    Now consider the $k$'th inequality of the majorization relation in Eq.\,\eqref{eq:main}. By Ky Fan's maximum principle, there is a rank-$k$ projection $E$ s.t.\ the l.h.s.\ becomes $\sum_{l=1}^m\tr{E(A_l\otimes B_l)}$, which, by the above, is bounded by $\tr{\Lambda^T C}$, where $\Lambda_{r,j}\coloneqq\sum_{l=1}^m \lambda_r(A_l)\lambda_j(B_l)$. So $\Lambda$ contains the coefficients of the vector on the r.h.s.\ of Eq.\,\eqref{eq:main} when expanded in a product basis, and
    $$\sum_{l=1}^m\tr{E(A_l\otimes B_l)}\leq \tr{\Lambda^T C}\leq\max_{\substack{\Gamma_{rj}\in[0,1]\\ \sum_{r,j}\Gamma_{rj}=k}} \tr{\Lambda^T \Gamma} \leq \sum_{q=1}^k ({\rm vec\;}\Lambda)_q^\downarrow,$$
    which proves Eq.\,\eqref{eq:main} for positive matrices. Also note that in this case the sums of all entries of the two sides of Eq.\,\eqref{eq:main} agree, since both are equal to $\sum_l\tr{A_l}\tr{B_l}$ -- so majorization holds.

    It remains to show that the case of singular values of general matrices reduces to the case of positive matrices. We follow the argument of \cite[Lem.\,1]{RicoWolf}. So let $A_l$, $B_l$ now be arbitrary, set $M_l\coloneqq A_l\otimes B_l$ with polar decomposition $M_l=U_l|M_l|$, with $|M_l|=|A_l|\otimes |B_l|$, and define $$L\coloneqq\sum_{l=1}^m |M_l|^{1/2} U_l^*\otimes| l\rangle ,\qquad R\coloneq\sum_{l=1}^m |M_l|^{1/2}\otimes| l\rangle,$$
    where $\{|l\rangle\}_{l=1}^{m}$ denotes an orthonormal basis of $\C^m$. Then 
    $$L^*R=\sum_{l=1}^m M_l,\quad L^*L=\sum_{l=1}^m|M_l^*|,\quad R^*R=\sum_{l=1}^m|M_l|.$$
    The Cauchy-Schwarz inequality for unitarily invariant norms \cite[IX.5]{Bhatia} applied to the Ky Fan $k$-norm (sum of the top $k$ singular values) then yields
    \begin{equation}
\label{eq:svdfinalproof}\big\|L^*R\big\|_{(k)}\leq\big\|L^*L\big\|_{(k)}^{1/2}\big\|R^*R\big\|_{(k)}^{1/2}=\Big\|\sum_{l=1}^m |M_l^*|\Big\|_{(k)}^{1/2} \Big\|\sum_{l=1}^m |M_l|\Big\|_{(k)}^{1/2}.
    \end{equation}
Both norms on the r.h.s.\ of Eq.\,(\ref{eq:svdfinalproof}) are bounded via Eq.\,(\ref{eq:main}) applied to positive matrices. Moreover, both bounds lead to the same right-hand side, since $\lambda(|A_l|)=\lambda(|A_l^*|)=\sigma(A_l)$, and similarly for $B_l$, which completes the proof.
\end{proof}
\emph{Remark:} The extension of Eq.\,(\ref{eq:main}) to more tensor factors does not hold for $m>2$ \cite{Alhejji2026} but can be shown for $m=2$ \cite{Alhejji2026, Guterman}.
\begin{corollary*}[Singular values of quantum channels] Let $(K_l)_{l=1}^m\subseteq \C^{d\times d}$ be Kraus operators of a completely positive map $\C^{d\times d}\ni X\mapsto T(X)\coloneqq\sum_{l=1}^m K_l X K_l^*$. Then the singular values of $T$ satisfy the weak majorization
\begin{equation}
   \sigma(T)\prec_w \sum_{l=1}^m\sigma(K_l)\otimes\sigma(K_l)\Big. 
\end{equation}
    
\end{corollary*}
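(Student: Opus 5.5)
The plan is to identify $T$ with a matrix via the standard vectorization and then apply the Theorem directly. The singular values of the linear map $T$ on $\C^{d\times d}$ are understood with respect to the Hilbert--Schmidt inner product. Fix the vectorization $\mathrm{vec}(X)=\sum_{i,j}X_{ij}\,|i\rangle\otimes|j\rangle$, which is an isometry from $(\C^{d\times d},\langle\cdot,\cdot\rangle_{HS})$ onto $\C^d\otimes\C^d$. Under this identification the map $X\mapsto KXK^*$ corresponds to the matrix $K\otimes\overbar{K}$, since $\mathrm{vec}(AXB)=(A\otimes B^T)\mathrm{vec}(X)$. The matrix representation of $T$ (its "transfer matrix") is therefore
$$\hat T=\sum_{l=1}^m K_l\otimes\overbar{K_l}.$$
Because vec is unitary, the singular values of $T$ as an operator coincide with $\sigma(\hat T)$.

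Next I apply the Theorem with $A_l\coloneqq K_l$ and $B_l\coloneqq\overbar{K_l}$, for $l\in\{1,\ldots,m\}$. It gives
$$\sigma(\hat T)\prec_w\sum_{l=1}^m\sigma(K_l)\otimes\sigma(\overbar{K_l}).$$
It remains to note that $\sigma(\overbar{K})=\sigma(K)$. This holds because $\overbar{K}^*\overbar{K}=\overbar{K^*K}$, and $K^*K$ is Hermitian, so its eigenvalues are real and complex conjugation of the matrix leaves its spectrum unchanged. Substituting this yields the claimed weak majorization.

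The only point needing care is the convention: one must make sure the singular values of $T$ are taken with respect to the Hilbert--Schmidt structure, and that the vectorization convention gives $K\otimes\overbar K$. Other conventions give $\overbar K\otimes K$, which leads to the same conclusion by symmetry of the tensor-product ordering up to a permutation. Both reorderings are harmless: weak majorization is invariant under permuting the entries of either vector, and a swap of tensor factors is unitary and so does not change singular values. I expect no genuine obstacle beyond fixing these identifications.
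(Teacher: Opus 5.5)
Your proposal is correct and follows the paper's proof: you identify $T$, with respect to the Hilbert--Schmidt inner product, with $\sum_l K_l\otimes\overbar{K_l}$, apply the Theorem with $A_l=K_l$, $B_l=\overbar{K_l}$, and use $\sigma(\overbar{K_l})=\sigma(K_l)$. The paper cites this identification, whereas you verify it directly via $\mathrm{vec}(AXB)=(A\otimes B^T)\mathrm{vec}(X)$ and also check $\overbar{K}^*\overbar{K}=\overbar{K^*K}$.
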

\begin{proof}
    When equipping $\C^{d\times d}$ with the Hilbert Schmidt inner product, the singular values of $T$ become those of $\sum_l  K_l\otimes \overbar{K_l}$ w.r.t.\ the ordinary inner product (cf. \cite[Eq.\,(4)]{Wolf2008}). Hence, the Theorem and $\sigma(\overbar{K_l})=\sigma(K_l)$ yield the result. 
\end{proof}

%
%

\appendix

\bibliographystyle{halpha}
\bibliography{KyFan}{}\vspace*{15pt}

\end{document}